\newtheorem{theorem}{Theorem}
\newtheorem{corollary}{Corollary}
\newtheorem{proposition}{Proposition}
\newtheorem{lemma}{Lemma}
\newtheorem{definition}{Definition}
\newtheorem{assumption}{Assumption}
\newtheorem{remark}{Remark}
\newcommand{\sh}[1]{\textcolor{RedViolet}{[#1]\raise 0.5ex \hbox{\footnotesize{SH}}}}
\title{\LARGE \bf
Bridging Finite and Infinite-Horizon Nash Equilibria in Linear Quadratic Games
}
\author{Giulio Salizzoni, Sophie Hall, and Maryam Kamgarpour
\thanks{Giulio Salizzoni and Maryam Kamgarpour are with SYCAMORE Lab, School of Engineering, EPFL, Lausanne, Switzerland (email: {\tt\small giulio.salizzoni@epfl.ch}, {\tt\small maryam.kamgarpour@epfl.ch}). Giulio Salizzoni is gratefully supported by the Swiss National Science Foundation, grant number 207984. Sophie Hall is with the Automatic Control Laboratory (IfA) at ETH Zürich, Switzerland (email: {\tt\small shall@control.ee.ethz.ch}) }
}
\begin{document}
\maketitle
\thispagestyle{empty}
\pagestyle{empty}

\begin{abstract}
Finite-horizon linear quadratic (LQ) games admit a unique Nash equilibrium, while infinite-horizon settings may have multiple. We clarify the relationship between these two cases by interpreting the finite-horizon equilibrium as a nonlinear dynamical system. Within this framework, we prove that its fixed points are exactly the infinite-horizon equilibria and that any such equilibrium can be recovered by an appropriate choice of terminal costs. We further show that periodic orbits of the dynamical system, when they arise, correspond to periodic Nash equilibria, and we provide numerical evidence of convergence to such cycles. Finally, simulations reveal three asymptotic regimes: convergence to stationary equilibria, convergence to periodic equilibria, and bounded non-convergent trajectories. These findings offer new insights and tools for tuning finite-horizon LQ games using infinite-horizon analysis.
\end{abstract}

\section{Introduction}

The linear quadratic regulator (LQR) problem, defined by linear system dynamics and quadratic cost functions, is one of the most widely studied and applied models in control theory. In both finite and infinite-horizon formulations, the problem admits a unique optimal solution under standard assumptions, and the two formulations are tightly connected. Specifically, the backward Riccati recursion of the finite-horizon problem converges to the unique solution of the algebraic Riccati equation, which characterizes the infinite-horizon optimal policy. Thus, finite-horizon problems provide a consistent approximation of the infinite-horizon one, and the terminal cost only affects the transient part of the solution

Linear quadratic (LQ) games extend the LQR problem to a multi-agent setting, where each agent interacts with a shared linear system but optimizes an individual cost function defined by its own matrices $Q^i$ and $R^i$. Applications include robotics \cite{cappello2021distributed, losey2019robots}, autonomous driving \cite{fisac2019hierarchical, ren2025chance}, and electricity markets \cite{paccagnan2016aggregative, kirsch2025distributed}. In this work we focus on discrete-time general-sum LQ games and linear state-feedback Nash equilibria. Unlike the LQR, however, the analysis of existence, uniqueness, and stability of Nash equilibria in LQ games is substantially more challenging.

For finite-horizon discrete-time LQ games, it is known that under standard assumptions, a unique feedback Nash equilibrium exists \cite{Basar98}. The infinite-horizon case, in contrast, is far richer. Both \cite{salizzoni2024nash, nortmann2023feedback} demonstrated that the scalar two-agent setting can admit between one and three distinct Nash equilibria. Additionally, \cite{nortmann2024feedback} showed that the number of equilibria can grow exponentially with the number of agents in the scalar case. Extending these results to more general cases poses significant challenges. Even in simpler settings, such as potential decoupled LQ games, characterizing the Nash equilibria remains a highly complex task \cite{PotentialDecoupledLQ}.

The difference in the number of Nash equilibria between the finite-horizon (where a unique Nash equilibrium exists) and infinite-horizon cases (where multiple equilibria may arise) has been observed in the literature \cite{Basar98, engwerda1998open}, but not well understood. It is well-established that one of the infinite-horizon equilibria can be obtained as the limit of the finite-horizon equilibrium as the horizon tends to infinity, if the limit exists \cite{Basar98}. A key result by \cite{weeren99} further demonstrated that, in the continuous-time scalar two-agent setting, this limit always exists and its value depends on the choice of terminal costs in the finite-horizon game. However, they analyzed a specific class of systems and could not generalize the results to higher-dimensional or multi-agent cases.

This discrepancy raises fundamental questions:
\begin{enumerate}
    \item Which infinite-horizon Nash equilibria can be recovered from the finite-horizon games?
    \item What are the possible asymptotic behaviors of the Riccati recursion?
    \item When does the finite-horizon recursion converge?
\end{enumerate}
These questions are both theoretically significant, as they expose structural differences between single and multi-agent settings, and practically relevant, since infinite-horizon formulations are often approximated with finite-horizon strategies.

Our results connects with receding-horizon games \cite{hall2022receding, hall2024receding, hall2024stability, benenati2024linear}. In such settings, agents repeatedly solve finite-horizon LQ games with shifting horizons to approximate infinite-horizon objectives. Whether the resulting strategies converge, and to which equilibria, is critical for ensuring stability and predictability of the closed-loop system. Our results also provide insight into incentive design: while in classical control the cost weights are tuned to shape closed-loop behavior, in games the matrices $Q^i$ and $R^i$ typically represent agents’ preferences. Nonetheless, they can still be influenced in hierarchical or incentive-aligned designs \cite{coogan2013energy, kessler2020linear}, where understanding the mapping between finite and infinite-horizon equilibria can be useful.

The main contributions of the paper are threefold:
\begin{enumerate}
    \item By viewing the Riccati recursion of finite-horizon LQ games as a discrete-time dynamical system, we prove that any infinite-horizon Nash equilibrium can be obtained as the outcome of a finite-horizon game through the choice of appropriate terminal costs.
    \item We establish that cycles of the Riccati recursion correspond to periodic (non-stationary) Nash equilibria. Moreover, our simulations suggest that such cycles emerge, a phenomenon absent in the single agent case.
    \item We analyze the convergence of the Riccati recursion and show that it may converge to stationary equilibria, converge to periodic equilibria, or generate bounded but non-convergent trajectories, depending on system parameters, stability properties, and terminal costs.
\end{enumerate}

\section*{Notation}

 Let $\mathbb{R}^n$ denote the $n$-dimensional real vector space, and let $\mathbb{S}^n$ be the space of symmetric $n \times n$ matrices. The notation $\mathbb{S}^n_+$ (respectively, $\mathbb{S}^n_{++}$) denotes the cone of positive semidefinite (respectively, positive definite) symmetric matrices. We use bold symbols to indicate ordered tuples across $N$ agents: $\mathbf{P} = (P^1, P^2, \dots, P^N)$. We use $\mathbf{P}^{-i}$ to denote the ordered tuple $(P^1, \dots, P^{i-1}, P^{i+1}, \dots, P^N)$. We denote by $\| \cdot \|$ the matrix norm induced by the Euclidean norm on vectors, i.e., $\|A\| = \sup_{\|x\| = 1} \| A x \|$. This coincides with the spectral norm and is used throughout when referring to stability conditions.

\section{Finite and infinite-horizon LQ games} \label{sec:settings}

Consider the coupled linear dynamic system
\begin{equation}\label{eq:dynamics}
    x_{t+1} = A x_t + \sum_{i=1}^N B^i u_t^i + w_t,
\end{equation}
where $x_t \in \mathbb{R}^n$ is the global system state, $u_t^i \in \mathbb{R}^{m^i}$ is the control input of player $i \in [1, N]$, $w_t$ is Gaussian noise with zero mean and $\mathbb{E} [(w_t)^\top w_t] = W$. The matrices $A \in \mathbb{R}^{n \times n}$ and $B^i \in \mathbb{R}^{n \times m^i}$ describe the state and input dynamics, respectively. Each agent $i$ chooses a strategy $\gamma^{i}_t : \mathbb{R}^n \mapsto \mathbb{R}^{m^i} $. For the infinite-horizon case, we consider initially a stationary strategy $\gamma^i$.

At every time step, each agent $i$ aims to minimize its stage cost:
\begin{align*}
    c^i ( x, u^i ) = x^\top Q^i x + (u^i)^\top R^i u^i,
\end{align*}
where $Q^i, R^i \in \mathbb{S}^n_{++}$ \footnote{Unlike much of the literature, we assume $Q^i \succ 0$ (rather than $Q^i \succeq 0$). This choice simplifies detectability arguments; see Remark~\ref{rem:Q_pos_def}.}.

In the finite-horizon case, each player's cost is given by:
\begin{align} \label{eq:cost_finite_horizon}
    &J^i(x_0, \boldsymbol{\gamma}) = \mathbb{E}_{w} \left[ \frac{1}{T} \left( \sum_{t=0}^{T-1} c^i ( x_t,u^i_t ) + x_T^\top Q^i_T x_T   \right)\right]
\end{align}
with  $Q^i_T \in \mathbb{S}^{n}_{++}$. In the infinite-horizon case, the player's objective is
\begin{equation} \label{eq:cost_infinite_horizon}
    J^i(x_0,\boldsymbol{\gamma}) = \lim_{T \to \infty} \mathbb{E}_{w} \left[ \frac{1}{T}  \sum_{t=0}^{T} c^i ( x_t,u^i_t ) \right].
\end{equation}

\begin{definition} \label{def:NE}
    A set of strategies $\boldsymbol{\gamma}^*$ constitutes a Nash equilibrium if for all $i \in [1, N]$
\begin{align*}
        J^i(x_0, \gamma^{i,*},\boldsymbol{\gamma}^{-i,*}) \leq J^i( x_0, \gamma^{i},\boldsymbol{\gamma}^{-i,*}), \quad \forall \gamma^{i} \in \Gamma^{i},
\end{align*}
and, in the infinite-horizon case, the induced closed-loop dynamics is asymptotically stable.
\end{definition}
At a Nash equilibrium, no player can unilaterally reduce their cost by deviating from their equilibrium strategy. We focus on linear state-feedback strategies $u^i_t = K^i_t x_t$, with $K^i_t \in \mathbb{R}^{m^i \times n}$, for which the Nash equilibria can be computed by solving a set of coupled Riccati equations. While other (nonlinear) Nash equilibria might exist \cite{Basar98}, no general method is known for computing them in the LQ setting. In the next subsections, we analyze the finite-horizon case first, followed by the infinite-horizon one.

 Given a collection of feedback gains $\mathbf{K}_t$, we define the joint closed-loop matrix $A^{cl}_t$ and, for each agent $i$, the partial closed-loop matrix $A^{cl,-i}_t$:
\begin{align*}
    A^{cl}_t = A - \sum_{j= 1}^N B^j K^j_t, \quad A^{cl,-i}_t = A - \sum_{j \neq i} B^j K^j_t,
\end{align*}
Accordingly, we use $A^{cl}$ and $A^{cl,-i}$ to denote the infinite-horizon counterparts when the gains are time-invariant.
    

\section{Linear state feedback policies}

\subsection{Finite-horizon setting}

 The existence of a linear state-feedback Nash equilibrium in finite-horizon LQ games depends on the solvability of a set of coupled Riccati equations. Let $\mathbf{P}_t \in (\mathbb{S}_{+}^{n})^N$, for $t \in [0, T-1]$ and each $i \in [1,N]$ satisfy the equations
\begin{align}
    P_t^i = &Q^i +   (A^{cl,-i}_t)^\top P^i_{t+1} A^{cl,-i}_t - (A^{cl,-i}_t)^\top P^i_{t+1} B^i \nonumber \\
    &(R^i + (B^i)^\top P^i_{t+1} B^i)^{-1} (B^i)^\top P^i_{t+1} A^{cl,-i}_t, \label{eq:P_NE_FH} \\
    K_{t}^i = &(R^i + (B^i)^\top P^i_{t+1} B^i)^{-1} (B^i)^\top P^i_{t+1} A^{cl,-i}_{t}, \label{eq:K_NE_FH}
\end{align}
with $\mathbf{P}_T = \mathbf{Q}_T$. 

Note that the value of $K^i_t$ in \eqref{eq:K_NE_FH} depends on the policies $K^j_t$ of the other agents at the same time step through the term $A^{cl,-i}_t$. Equation \eqref{eq:K_NE_FH} can be rewritten as
\begin{align} \label{eq:riccati_eq_single}
    &R^i K^i_t + \sum_{j = 1}^N (B^i)^\top P^i_{t+1}  B^j K^j_t = (B^i)^\top P^i_{t+1} A.
\end{align}
By stacking together \eqref{eq:riccati_eq_single} for all the agents, we obtain a system of equations linear in $K_t$  shown in \eqref{eq:ricatti_eqn}. 

\begin{figure*}[ht]
{\small
\begin{equation}\label{eq:ricatti_eqn}
\begin{bmatrix}
R^1+(B^1)^\top P_{t+1}^{1} B^1 & (B^1)^\top P_{t+1}^{1} B^2 &  \cdots & (B^1)^\top P_{t+1}^{1} B^N \\
(B^2)^\top P_{t+1}^{2} B^1 & R^2+(B^2)^\top P_{t+1}^{2} B^2 & \cdots & (B^2)^\top P_{t+1}^{2} B^N \\
\vdots & \vdots & \ddots & \vdots \\
(B^N)^\top P_{t+1}^{N} B^1 & (B^N)^\top P_{t+1}^{N} B^2 & \cdots & R^N+(B^N)^\top P_{t+1}^{N} B^N
\end{bmatrix}
\begin{bmatrix}
    K_t^1 \\
    K_t^2 \\
    \vdots \\
    K_t^N
\end{bmatrix} = 
\begin{bmatrix}
    (B^1)^\top P_{t+1}^{1} A \\
    (B^2)^\top P_{t+1}^{2} A \\
    \vdots \\
    (B^N)^\top P_{t+1}^{N} A \\
\end{bmatrix}.
\end{equation}}
\end{figure*}

A finite-horizon linear quadratic game described in \eqref{eq:dynamics} and \eqref{eq:cost_finite_horizon} admits a unique feedback Nash equilibrium solution if and only if equation \eqref{eq:ricatti_eqn} has a unique solution for each $t \in [0, T-1]$ \cite[Corollary 6.1 and Remark 6.5]{Basar98}. In particular, if system \eqref{eq:P_NE_FH} \eqref{eq:K_NE_FH}  can be solved backward in time from $t = T-1$ to $t = 0$ without encountering singularities, then the resulting sequence $\{\mathbf{P}_t\}_{t=0}^T$ defines the only possible feedback Nash equilibrium of the game. Note that there are no known conditions on $(A, \mathbf{B}, \mathbf{Q}, \mathbf{R}, \mathbf{Q}_T)$ such that the solvability of \eqref{eq:ricatti_eqn} is guaranteed.

Since we are interested in studying the convergence properties of the Nash equilibrium, it is reasonable to assume that the game has a unique solution over every finite horizon $T$. Hence, we adopt the following assumption, consistent with \cite[Equation (15)]{engwerda1998open}, \cite[Assumption 2.3]{Hambly21}, and \cite[Assumption 1]{Roudneshin20}.
\begin{assumption} \label{ass:invertibility}
    Equation \eqref{eq:ricatti_eqn} admits a unique solution for any $t \in [0, \dots, T-1]$.
\end{assumption}

\subsection{Infinite-horizon setting}

We introduce the following assumption, necessary for the existence of a Nash equilibrium in the infinite-horizon game.
\begin{assumption} \label{ass:stab_detect}
    The pair $\left(A, \left[ B^1 \dots B^N \right] \right)$ is stabilizable.
\end{assumption}
If the pair $\left(A, \left[ B^1 \dots B^N \right] \right)$ is not stabilizable, there does not exist a set of strategies for which the induced closer-loop dynamics is asymptotically stable.

Let $\mathbf{P} \in (\mathbb{S}_{+}^{n})^N$ satisfies the following set of equations for $i \in [1,N]$:
\begin{align}
      P^i = &Q^i +   (A^{cl,-i})^\top P^i A^{cl,-i} - (A^{cl,-i})^\top P^i B^i  \nonumber\\
      &(R^i + (B^i)^\top P^i B^i)^{-1} (B^i)^\top P^i A^{cl,-i} \label{eq:P_NE_IH},\\
       K^i = &(R^i + (B^i)^\top P^i B^i)^{-1} (B^i)^\top P^i A^{cl,-i}. \label{eq:K_NE_IH}
    \end{align}
As in the finite-horizon case, equation \eqref{eq:K_NE_IH} can be rewritten as:
\begin{align} \label{eq:riccati_eq_single_IH}
    & R^i K^i + \sum_{j = 1}^N (B^i)^\top P^i  B^j K^j = (B^i)^\top P^i A
\end{align}
and stacked with the ones of the other agents to obtain an equation analogous to \eqref{eq:ricatti_eqn}. Thus, we introduce the function $g: (\mathbb{S}_{+}^{n})^N \mapsto \{\mathbb{R}^{m_i \times n} \}_{i=1}^N$ defined implicitly by \eqref{eq:K_NE_IH}, and rewrite $\mathbf{K} = g(\mathbf{P})$. In this way $\mathbf{K}$ can be substituted into \eqref{eq:P_NE_IH}, yielding a system solely in terms of $\mathbf{P}$. For each given set of solutions $\mathbf{P}$, the feedback policies $\mathbf{K}$ are uniquely determined.

From \cite[Proposition 6.3]{Basar98} we know that, for any solution $\mathbf{P} \in (\mathbb{S}_{+}^{n})^N$ of the set of equations \eqref{eq:P_NE_IH} and \eqref{eq:K_NE_IH} such that each pair $(A^{cl,-i}, B^i)$ is stabilizable and $(A^{cl,-i}, Q^i)$ is detectable, the policies $\mathbf{K}  = g\left( \mathbf{P} \right)$ stabilize the closed-loop system and form a Nash equilibrium.
Notice that in our setting, the pair $(A^{cl,-i}, (Q^i)^{1/2})$ is always detectable for any $i$, since we assumed $Q^i$ positive definite. Using the one-to-one correspondence between $\mathbf{P}$ and $\mathbf{K}$, we formally define the set of stationary linear state-feedback Nash equilibria in terms of $\mathbf{P}$:
\begin{align*}
    \mathcal{P}^{NE}_{stat} := \left\{ \mathbf{P}: \mathbf{K} = g\left( \mathbf{P} \right) \text{ is a Nash equilibrium} \right\}.
\end{align*}
Sufficient conditions on the existence of a Nash equilibrium have been obtained in the scalar case in \cite{salizzoni2024nash, nortmann2023feedback}, but have not been generalized to multidimensional state space settings.

\section{Connections between finite and infinite-horizon Nash equilibria} \label{sec:connections}

In the single-agent case (the LQR), the relationship between finite and infinite-horizon solution is well understood: the backward Riccati recursion always converges to the unique solution of the algebraic Riccati equation, which characterizes the infinite-horizon optimal policy. Consequently, the terminal cost only influences the transient part of the finite-horizon solution, while the asymptotic behavior is uniquely determined.

In LQ games, the situation is fundamentally different for three reasons. First, the infinite-horizon problem may admit multiple equilibria. Second, these equilibria need not be stationary: periodic orbits are a common feature of discrete maps \cite[Chapter 7]{katok1995introduction}, so the recursion may generate cycles, corresponding to periodic Nash equilibria. Third, the recursion may not converge. In this Section we study the first two points, while in the next one we study the third through simulations.

Consider the set \(\mathbf{P}_t\), evolving backward in time according to the equations \eqref{eq:P_NE_FH} and \eqref{eq:K_NE_FH} which can be written concisely using the function $f : ( \mathbb{S}^{n}_{++} )^N \to ( \mathbb{S}^{n}_{++} )^N $:
\begin{align} \label{eq:iteration}
\mathbf{P}_t  = f\left( \mathbf{P}_{t+1} \right), \quad \mathbf{P}_T = \mathbf{Q}_T.
\end{align}
Under Assumption \ref{ass:invertibility}, function \(f\) returns a unique solution. Moreover, notice that $f$ preserves positive definiteness: if  $\mathbf{P}_{t+1} \in ( \mathbb{S}^{n}_{++} )^N$ , then also $\mathbf{P}_{t} \in ( \mathbb{S}^{n}_{++} )^N$, from equations \eqref{eq:P_NE_FH} and \eqref{eq:K_NE_FH} we obtain:
\begin{align} \label{eq:alterPFH}
    P_t^i = &Q^i +   (K^i_t)^\top R^i K^i_t + (A^{cl}_t)^\top P^i_{t+1} A^{cl}_t,
\end{align}
where all the terms on the right-hand side are positive semidefinite. 



\subsection{Fixed points}

The fixed points of the function $f$ are:
\begin{align} \label{def:fixed_points}
    \text{Fix}(f) := \left\{ \mathbf{P} \in (\mathbb{S}^{n}_{++} )^N: \mathbf{P} = f\left( \mathbf{P} \right) \right\}
\end{align}
The following result establishes the equivalence between these fixed points and the stationary Nash equilibria of the infinite-horizon game.
\begin{proposition}
\label{pr:equilibrium_equivalence}
Consider the set $\mathcal{P}^{NE}_{stat}$ of the infinite-horizon LQ game described by \eqref{eq:dynamics} and \eqref{eq:cost_infinite_horizon}, and the iteration \eqref{eq:iteration}, with the same matrices $\mathbf{Q}$ and $\mathbf{R}$ as the infinite-horizon LQ game. Then
\begin{equation}
\label{eq:fixed_point}
\mathcal{P}^{NE}_{stat} = \text{Fix}(f).
\end{equation}
\end{proposition}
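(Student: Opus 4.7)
My plan is to prove the equality $\mathcal{P}^{NE}_{stat} = \text{Fix}(f)$ by double inclusion, exploiting the fact that evaluating the finite-horizon recursion \eqref{eq:P_NE_FH}--\eqref{eq:K_NE_FH} at a stationary profile $\mathbf{P}_t = \mathbf{P}_{t+1} = \mathbf{P}$ recovers, algebraically, the infinite-horizon coupled Riccati system \eqref{eq:P_NE_IH}--\eqref{eq:K_NE_IH}. Thus $\mathbf{P} = f(\mathbf{P})$ is formally the same statement as ``$\mathbf{P}$ solves the stationary Riccati equations''. The real content lies in reconciling (i) the stability and equilibrium requirements built into Definition~\ref{def:NE} with the purely algebraic fixed-point condition, and (ii) the strict positive definiteness in the definition of $\text{Fix}(f)$ versus the a priori weaker positive semidefiniteness that enters $\mathcal{P}^{NE}_{stat}$.

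For the inclusion $\mathcal{P}^{NE}_{stat} \subseteq \text{Fix}(f)$, I would start from $\mathbf{P}$ such that $\mathbf{K} = g(\mathbf{P})$ is a Nash equilibrium; by Definition~\ref{def:NE} the induced matrix $A^{cl}$ is Schur-stable. Specializing identity \eqref{eq:alterPFH} to the stationary profile and unrolling the resulting discrete-time Lyapunov equation, I would express $P^i$ as the infinite sum $\sum_{k \ge 0} (A^{cl})^{\top k} (Q^i + (K^i)^\top R^i K^i) (A^{cl})^k$. Schur-stability of $A^{cl}$ ensures convergence of the series, and $Q^i \succ 0$ forces the limit to be strictly positive definite, so $\mathbf{P} \in (\mathbb{S}^n_{++})^N$. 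Combined with the algebraic identification above, this yields $\mathbf{P} \in \text{Fix}(f)$.

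For the converse $\text{Fix}(f) \subseteq \mathcal{P}^{NE}_{stat}$, I would start from $\mathbf{P} \in \text{Fix}(f)$, which by construction has $\mathbf{P} \succ 0$ and satisfies \eqref{eq:P_NE_IH}--\eqref{eq:K_NE_IH}. To invoke \cite[Proposition~6.3]{Basar98} and conclude that $g(\mathbf{P})$ is a Nash equilibrium, I need to verify for every agent $i$ that $(A^{cl,-i}, Q^i)$ is detectable and $(A^{cl,-i}, B^i)$ is stabilizable. Detectability is immediate from $Q^i \succ 0$, as already observed in the paper. For stabilizability, I would reinterpret \eqref{eq:alterPFH} at the fixed point as the discrete-time Lyapunov equation $P^i - (A^{cl})^\top P^i A^{cl} = Q^i + (K^i)^\top R^i K^i$, with both $P^i$ and the right-hand side strictly positive definite; the standard Lyapunov theorem then forces $A^{cl} = A^{cl,-i} - B^i K^i$ to be Schur, which simultaneously shows that $K^i$ stabilizes $(A^{cl,-i}, B^i)$ and verifies the asymptotic-stability clause of Definition~\ref{def:NE}.

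The main technical obstacle I foresee is precisely this stability step: deducing a Schur-stable closed loop purely from the algebraic fixed-point condition. The assumption $Q^i \succ 0$ (strict) is the key enabler, since it yields a strictly positive-definite right-hand side in the Lyapunov identity extracted from \eqref{eq:alterPFH} and lets me bypass the usual detectability-plus-observability side conditions. Once this Lyapunov step is in hand, both inclusions reduce to a direct substitution comparing \eqref{eq:P_NE_FH}--\eqref{eq:K_NE_FH} with \eqref{eq:P_NE_IH}--\eqref{eq:K_NE_IH}.
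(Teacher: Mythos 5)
Your proposal is correct and follows essentially the same route as the paper: both inclusions come from identifying the stationary recursion with the coupled algebraic Riccati system \eqref{eq:P_NE_IH}--\eqref{eq:K_NE_IH}, and the reverse inclusion is settled exactly as in the paper via the Lyapunov identity $P^i - (A^{cl})^\top P^i A^{cl} = Q^i + (K^i)^\top R^i K^i \succ 0$ (enabled by $Q^i \succ 0$), detectability from $Q^i \succ 0$, and an appeal to \cite[Proposition 6.3]{Basar98}. Your series representation $P^i = \sum_{k \ge 0} ((A^{cl})^\top)^k \left( Q^i + (K^i)^\top R^i K^i \right) (A^{cl})^k$ in the forward inclusion is merely a slightly more careful handling of the $\mathbb{S}^n_{+}$ versus $\mathbb{S}^n_{++}$ point that the paper dispatches in one line.
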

\begin{proof}
The proof that $ \mathcal{P}^{NE}_{stat} \subseteq \text{Fix}(f)$ follows from the definitions and the structure of the equations. Specifically, the infinite-horizon Nash equilibrium conditions \eqref{eq:P_NE_IH} and \eqref{eq:K_NE_IH} correspond to the fixed point limit of the recursion defined by \eqref{eq:P_NE_FH} and \eqref{eq:K_NE_FH}  as $t \to \infty$ and are equivalent to the equation $\mathbf{P} = f \left( \mathbf{P} \right)$. Thus, any \(\mathbf{P}\) that solves the infinite-horizon equations also satisfies the fixed-point condition \eqref{def:fixed_points} of the recursion \eqref{eq:iteration}.\\
To prove that $ \text{Fix}(f) \subseteq \mathcal{P}^{NE}_{stat}$ we need to prove additionally that each pair $[A^{cl,-i}, B^i]$ is stabilizable, and each pair $[A^{cl,-i}, (Q^i)^{1/2}]$ is detectable . By combining equations \eqref{eq:P_NE_IH} and \eqref{eq:K_NE_IH} we obtain:
\begin{align}
    (A^{cl})^\top P^i A^{cl} - P^i = -(Q^i + (K^i)^\top R^i K^i) \prec 0,
\end{align}
since $Q^i, R^i \succ 0$. Hence $\rho(A^{cl,-i} + B^i K^i) =\rho(A^{cl}) < 1$, which proves that the pair $[A^{cl,-i}, B^i]$ is stabilizable. Given that each $Q^i$ is positive definite, then each pair  $[A^{cl,-i}, (Q^i)^{1/2}]$ is detectable.
\end{proof}
\begin{remark}\label{rem:Q_pos_def}
   If only $Q^i\succeq 0$, the inclusion $\mathrm{Fix}(f)\subseteq \mathcal{P}^{\mathrm{NE}}_{\mathrm{stat}}$ may fail. The difficulty is that the closed-loop matrices $A^{\mathrm{cl},-i}$ depend on the fixed point $\mathbf{P}=f(\mathbf{P})$, and we do not currently have conditions ensuring detectability of $(A^{\mathrm{cl},-i},(Q^i)^{1/2})$. Without detectability, the Riccati equation \eqref{eq:P_NE_IH} for player $i$ may admit multiple positive semidefinite solutions, and the optimal one may not be stabilizing \cite{kuvcera1972discrete}. Consequently, the fixed-point condition alone does not ensure a Nash equilibrium. In applications one may enforce this by a small regularization $Q^i \leftarrow Q^i + \epsilon I$ (which perturbs the game), or verify the detectability at the computed fixed point.
\end{remark}
A consequence of the above proposition is the following corollary, which explicitly links each infinite-horizon Nash equilibrium to the one of a finite-horizon game through appropriate selection of terminal costs.
\begin{corollary}
\label{cor:terminalCost}
Consider an infinite-horizon LQ game described by \eqref{eq:dynamics} and \eqref{eq:cost_infinite_horizon}. For any \(\mathbf{P} \in \mathcal{P}^{NE}_{stat}\), we can design a finite-horizon LQ game described by \eqref{eq:dynamics} and \eqref{eq:cost_finite_horizon} whose finite-horizon Nash equilibrium \( \mathbf{K}_t \) exists, is stationary,  stabilizes the system, and satisfies
\begin{equation}
\label{eq:terminal_cost_choice}
\mathbf{K}_t = \mathbf{K} = g\left(\mathbf{P} \right)\quad \forall t \in [0,1,\dots,T-1],
\end{equation}
by setting the the terminal costs as \(\mathbf{Q}_T = \mathbf{P}\).
\end{corollary}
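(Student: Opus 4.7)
The plan is to leverage Proposition~\ref{pr:equilibrium_equivalence} directly, which reduces the corollary to an essentially routine induction. Since $\mathbf{P} \in \mathcal{P}^{NE}_{stat}$, the equivalence $\mathcal{P}^{NE}_{stat} = \text{Fix}(f)$ tells us that $\mathbf{P} = f(\mathbf{P})$. Therefore, choosing $\mathbf{Q}_T = \mathbf{P}$ initializes the backward recursion \eqref{eq:iteration} exactly at a fixed point of $f$.

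First I would run a backward induction on $t$, starting from $t = T$. The base case is $\mathbf{P}_T = \mathbf{Q}_T = \mathbf{P}$ by the chosen terminal cost. For the inductive step, assuming $\mathbf{P}_{t+1} = \mathbf{P}$, equation \eqref{eq:iteration} together with the fixed-point property yields
\begin{align*}
\mathbf{P}_t = f(\mathbf{P}_{t+1}) = f(\mathbf{P}) = \mathbf{P},
\end{align*}
so $\mathbf{P}_t = \mathbf{P}$ for every $t \in [0, T]$. Existence and uniqueness of each step of the recursion are guaranteed by Assumption~\ref{ass:invertibility}, and positive definiteness is preserved along the recursion as already noted via \eqref{eq:alterPFH}. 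Since the finite-horizon gains are determined by \eqref{eq:K_NE_FH} and this map coincides algebraically with the infinite-horizon map $g$ defined through \eqref{eq:K_NE_IH}, substituting $\mathbf{P}_{t+1} = \mathbf{P}$ gives $\mathbf{K}_t = g(\mathbf{P}) = \mathbf{K}$ for all $t \in [0,T-1]$, establishing the claimed stationarity \eqref{eq:terminal_cost_choice}.

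Finally, stabilization of the closed-loop matrix $A^{cl}$ is inherited from the infinite-horizon side: membership of $\mathbf{P}$ in $\mathcal{P}^{NE}_{stat}$ requires, by Definition~\ref{def:NE}, that the induced closed-loop dynamics be asymptotically stable, and the spectral radius bound $\rho(A^{cl}) < 1$ was already derived inside the proof of Proposition~\ref{pr:equilibrium_equivalence} from $Q^i, R^i \succ 0$. The main obstacle is conceptual rather than technical, namely confirming that the finite-horizon map in \eqref{eq:K_NE_FH} evaluated at a time-invariant iterate agrees with the infinite-horizon map $g$; this is immediate from comparing \eqref{eq:K_NE_FH} and \eqref{eq:K_NE_IH}, so the remaining verification is purely bookkeeping once Proposition~\ref{pr:equilibrium_equivalence} is in hand.
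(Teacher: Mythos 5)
Your proof is correct and follows essentially the same route as the paper: choose $\mathbf{Q}_T=\mathbf{P}$, use the fixed-point property from Proposition~\ref{pr:equilibrium_equivalence} to propagate $\mathbf{P}_t=\mathbf{P}$ backward, identify the resulting gains with $g(\mathbf{P})$, and inherit $\rho(A^{cl})<1$ from the infinite-horizon equilibrium. The explicit induction and the appeal to Assumption~\ref{ass:invertibility} are just more detailed bookkeeping of the same argument.
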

\begin{proof}
Consider selecting the terminal costs as \(\mathbf{Q}_T = \mathbf{P}\). Given this choice, the finite-horizon Riccati recursion \eqref{eq:iteration} immediately satisfies the equilibrium condition \eqref{eq:fixed_point} at the terminal step. Consequently, this choice results in a constant solution \(\mathbf{P}_t = \mathbf{P}\) for all \(t \in [0,T]\), implying that the finite-horizon Nash equilibrium \eqref{eq:K_NE_FH} matches the infinite-horizon equilibrium exactly at every time step. Moreover, this solution stabilizes the system since at every time step $\rho \left( A - \sum_{i=1}^N B^i K^i_t \right) = \rho \left( A - \sum_{i=1}^N B^i K^i \right) < 1 $.
\end{proof}
By tuning the terminal costs, it is possible to obtain all the Nash equilibria of the infinite-horizon games in the finite-horizon version. However, achieving this requires prior knowledge of the exact infinite-horizon equilibria, which are often difficult to compute.

\subsection{Cycles} \label{subsec:cycles}

\begin{figure}[ht]
    \hspace*{-0.1cm}
    \includegraphics[width=\linewidth]{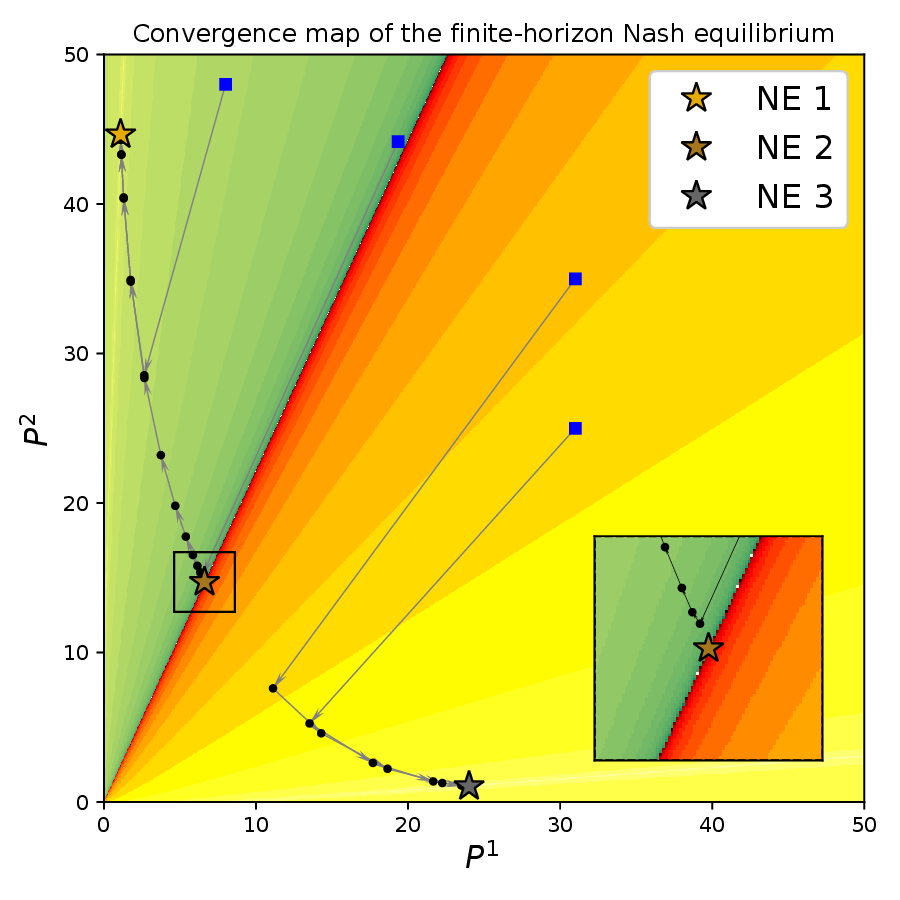}
    \caption{Convergence of the iteration based on the terminal costs. We did a grid search with $10^6$ combinations of  $\mathbf{Q}_T$. The three stars represent the $\mathbf{P}$ matrices of the three  Nash equilibria. If the terminal costs are in the green area, the iteration converged to the first Nash equilibrium, if they are in the yellow-red area, it converged to the third  Nash equilibrium. The darker the color, the more steps it took to converge. The image shows also the evolution of the recursion for four different initial conditions, highlighted by the blue squares.}
    \label{fig:convergence}
\end{figure}

\begin{figure*}[ht!]
  \centering
  \includegraphics[width=\textwidth]{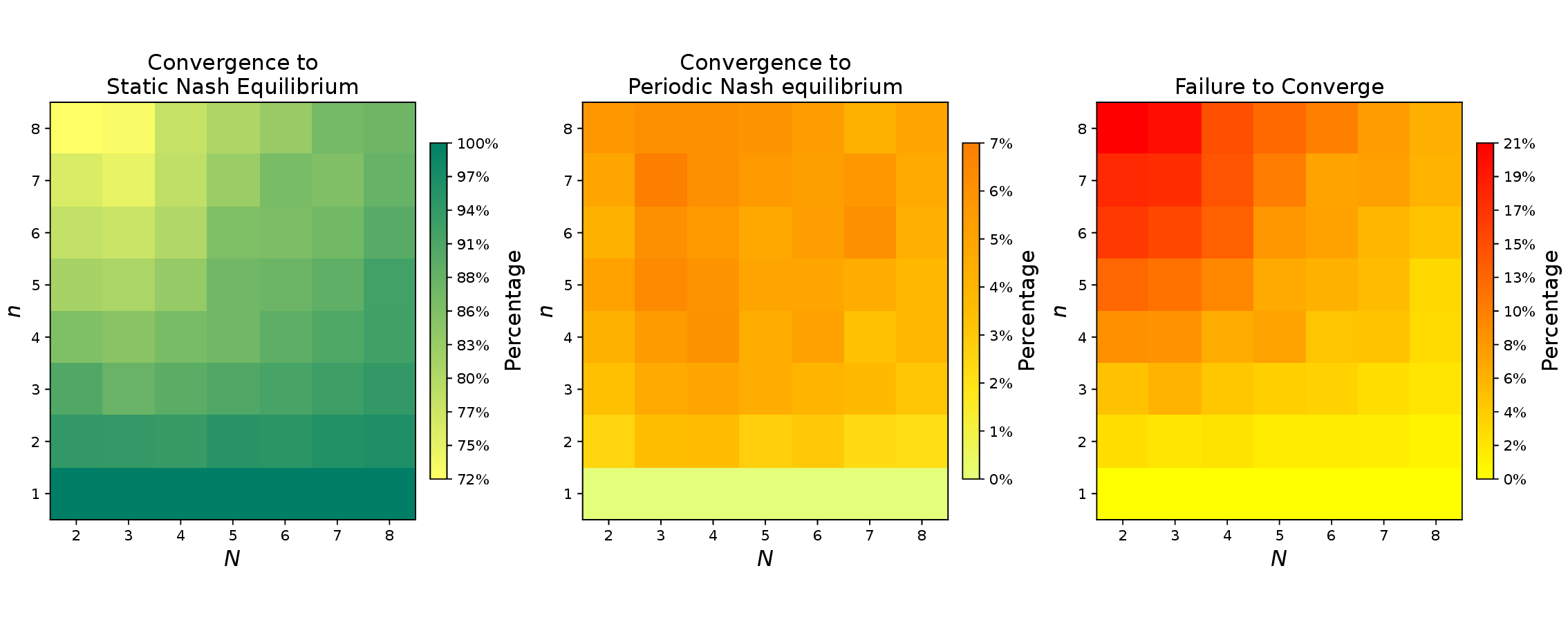} 
  \caption{For each pair $(n,N)$ we randomly generated $10^5$ games with random terminal costs. The value of $d$ is equal to $n$, but the results do not change significantly for other values. The three tables show the percentage of cases converging to an equilibrium point (left), to a cycle (center), or not converging (right). Notice that three different scales are used to better highlight variations.}
  \label{fig:table}
\end{figure*}

The nonlinear nature of recursion \eqref{eq:iteration} can admit other asymptotic behaviors. In particular, a set of symmetric positive semidefinite matrices $\{ \mathbf{P}_l \}_{l=1}^L$, with $\mathbf{P}_l \in (\mathbb{S}^{n}_{++} )^N$, constitutes a cycle if 
\begin{align} \label{eq:condition_cycle}
    &\mathbf{P}_l = f(\mathbf{P}_{l+1}), \quad \text{for } l \in [1,L-1], \quad \mathbf{P}_L = f(\mathbf{P}_{1}).
\end{align}
A cycle can have any length $L \in [2, \infty)$. We first prove that such cycles stabilize the system.
\begin{lemma} \label{lemma:cycle_stability}
  Given a cycle $\{ \mathbf{P}_l \}_{l=1}^L$ defined by \eqref{eq:condition_cycle}, the sequence of matrices $ \mathbf{K}_l = g \left( \mathbf{P}_l \right)$ for $l \in [1,L]$ stabilize the closed-loop system:
\begin{align*}
  \rho  \left( \prod_{l = 1}^L A^{cl}_l\right) < 1.
\end{align*}
\end{lemma}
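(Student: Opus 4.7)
The plan is to use the auxiliary identity~\eqref{eq:alterPFH} repeatedly around the full cycle, collapsing it into a discrete Lyapunov equation for one of the $P^i_l$'s, and then invoking the discrete Lyapunov theorem to conclude Schur stability of the product of closed-loop matrices. Since $\rho(AB)=\rho(BA)$, the spectral radius of the product is insensitive to cyclic permutation, so any ordering convention for $\prod_{l=1}^{L} A^{cl}_l$ will give the same conclusion.

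First, I apply \eqref{eq:alterPFH} at each index of the cycle (interpreting indices cyclically modulo $L$, so that $\mathbf{P}_{L+1}=\mathbf{P}_1$). This gives, for each player $i$ and each $l\in[1,L]$,
\begin{equation*}
P^i_l \;=\; Q^i + (K^i_l)^\top R^i K^i_l + (A^{cl}_l)^\top P^i_{l+1} A^{cl}_l .
\end{equation*}
Next I unroll this identity starting from $P^i_1$, substituting the expression for $P^i_2$, then for $P^i_3$, and so on up to $P^i_L$. After exactly $L$ substitutions the recursion closes on itself and I obtain
\begin{equation*}
P^i_1 \;=\; M^i + \Phi^{\top} P^i_1 \,\Phi, \qquad \Phi \;:=\; A^{cl}_L A^{cl}_{L-1}\cdots A^{cl}_1 ,
\end{equation*}
where $M^i$ is a telescoping sum of terms of the form $(A^{cl}_{k-1})^{\top}\cdots(A^{cl}_1)^{\top}\bigl(Q^i+(K^i_k)^\top R^i K^i_k\bigr)A^{cl}_1\cdots A^{cl}_{k-1}$. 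Each of those terms is positive semidefinite, and the $k=1$ term is exactly $Q^i\succ 0$, so $M^i\succ 0$.

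Finally, since $P^i_1\in\mathbb{S}^n_{++}$ (because, as noted in the excerpt, $f$ preserves positive definiteness and the cycle condition forces $\mathbf{P}_l\in(\mathbb{S}^n_{++})^N$), the identity $P^i_1-\Phi^{\top}P^i_1\Phi=M^i\succ 0$ is a standard discrete-time Lyapunov inequality with a strictly positive definite Lyapunov certificate $P^i_1$. The discrete Lyapunov theorem then yields $\rho(\Phi)<1$. Invoking $\rho(AB)=\rho(BA)$ to handle the cyclic reordering, this is precisely the conclusion $\rho\bigl(\prod_{l=1}^{L}A^{cl}_l\bigr)<1$ stated in the lemma.

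The main obstacle is purely bookkeeping: because the recursion~\eqref{eq:iteration} runs backward in time while the cycle is indexed forward, one must be careful that the substitutions unroll in the correct direction so that the product $\Phi$ closes after exactly $L$ steps. Once the indices are tracked consistently, the Lyapunov argument is immediate thanks to the strict positivity of $Q^i$ built into Assumption (on $Q^i\succ 0$); without this assumption one would only obtain $M^i\succeq 0$, which is insufficient for strict Schur stability and would require a separate detectability argument along the cycle.
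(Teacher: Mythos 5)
Your proof is correct and follows essentially the same route as the paper: both unroll the per-step identity \eqref{eq:alterPFH} around one period to obtain the discrete Lyapunov equation $P^i_1 = M^i + \Phi^\top P^i_1 \Phi$ for the monodromy matrix $\Phi = A^{cl}_L\cdots A^{cl}_1$ and conclude $\rho(\Phi)<1$ from positive definiteness; the paper merely finishes with an explicit eigenvector contradiction instead of citing the Lyapunov theorem. The only nitpicks are cosmetic (the factor ordering inside the terms of $M^i$ is transposed, and the $k=1$ term is $Q^i+(K^i_1)^\top R^i K^i_1$ rather than exactly $Q^i$), neither of which affects the argument.
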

\begin{proof}
By combining Equations \eqref{eq:P_NE_FH} and \eqref{eq:K_NE_FH} we obtain, for $i \in [i,N]$:
\begin{align*}
    P^i_l = Q^i + (K^i_l)^\top R^i K^i_l + (A^{cl}_l)^\top P^i_{l+1}A^{cl}_l.
\end{align*}
Defin $\Theta_0 := I_n$ and $\Theta_{l} := A^{cl}_l A^{cl}_{l-1} \dots A^{cl}_1 $ for $l \geq 1$.  Iterating the previous identity over one period gives:
\begin{align} \label{eq:ploop}
        P^i_{1} = \sum_{l = 1}^{L} \Theta_{l-1}^\top \left( Q^i + (K^i_l)^\top R^i K^i_l \right) \Theta_{l-1} + \Theta_L^\top P^i_1 \Theta_L.
\end{align}
Suppose, by contradiction, that $\Theta_L$ has an eigenvector $v \in \mathbb{R}^n, \|v\|=1$ with an associated eigenvalue $\lambda \in \mathbb{R}, |\lambda| > 1$. Let $v_\ell:=\Theta_\ell v$. Left- and right-multiplying \eqref{eq:ploop} by $v^\top$ and $v$ gives
\begin{align} \label{eq:lambdaeq}
    (\lambda^2 - 1) v^\top P^i_1 v = - \sum_{l= 1}^L v_l^\top \left( Q^i + (K^i_l)^\top R^i K^i_l \right)  v_l .
\end{align}
Since by assumption $\lambda^2 - 1 \geq 0$ and $P^i_1 \in \mathbb{S}^{n}_{++} $, the term on the left is positive. However, since both $Q^i$ and $R^i$ are positive definite, the right-hand side is strictly negative, which is a contradiction. Therefore $|\lambda|<1$ for every eigenvalue of $\Theta_L$.
\end{proof}
We can now prove that a cycle is a Nash equilibrium of the infinite-horizon game.
\begin{theorem}
\label{th:loopNE}
Consider a cycle $\{ \mathbf{P}_l \}_{l=1}^L$, with $\mathbf{P}_l \in (\mathbb{S}^{n}_{++} )^N$, which satisfies condition \eqref{eq:condition_cycle}, and let Assumption \ref{ass:stab_detect} hold. Then, $\{ \mathbf{K}_l \}_{l=1}^L$, with $ \mathbf{K}_l = g \left( \mathbf{P}_l \right)$, constitutes a Nash equilibrium of the infinite-horizon LQ game described in \eqref{eq:dynamics} and \eqref{eq:cost_infinite_horizon}.
\end{theorem}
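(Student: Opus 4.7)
The plan is to reduce the Nash property to a single-player best-response verification. I would fix an arbitrary player $i$, freeze the periodic strategies $\{K^j_l\}_{l=1}^L$ of the remaining players $j\neq i$, and show that $\{K^i_l\}_{l=1}^L$ is player $i$'s best response. Since the closed-loop asymptotic stability required by Definition \ref{def:NE} has already been established in Lemma \ref{lemma:cycle_stability}, proving best response for each $i$ completes the proof.

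With the other players' strategies fixed and periodic, player $i$ faces an average-cost, infinite-horizon LQR problem with $L$-periodic dynamics $x_{t+1}=A^{cl,-i}_l x_t+B^i u^i_t+w_t$, where $l=t\bmod L$, and stage cost $x_t^\top Q^i x_t+(u^i_t)^\top R^i u^i_t$. The central step is to invoke the standard theory of the discrete-time periodic Riccati equation (DPRE): under periodic stabilizability of $(A^{cl,-i}_l,B^i)$ and periodic detectability of $(A^{cl,-i}_l,(Q^i)^{1/2})$, the unique $L$-periodic positive-definite stabilizing solution of the DPRE determines the optimal $L$-periodic feedback. Extracting the $i$-th component of the cycle condition \eqref{eq:condition_cycle} together with \eqref{eq:P_NE_FH}--\eqref{eq:K_NE_FH} shows that $\{P^i_l\}_{l=1}^L$ is precisely a positive-definite $L$-periodic solution of this DPRE, with $\{K^i_l\}$ the associated periodic feedback.

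The remaining ingredient is verifying the two DPRE hypotheses for player $i$. Periodic detectability of $(A^{cl,-i}_l,(Q^i)^{1/2})$ is immediate from $Q^i\succ 0$. Periodic stabilizability of $(A^{cl,-i}_l,B^i)$ follows directly from Lemma \ref{lemma:cycle_stability}, because the feedback $\{K^i_l\}$ itself drives the monodromy $\prod_{l=1}^L(A^{cl,-i}_l-B^i K^i_l)=\prod_{l=1}^L A^{cl}_l$ to be Schur-stable. Uniqueness of the stabilizing periodic DPRE solution then identifies $\{K^i_l\}$ as player $i$'s unique optimal $L$-periodic best response against $\{\mathbf{K}^{-i}_l\}_{l=1}^L$. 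Since $i$ was arbitrary, $\{\mathbf{K}_l\}_{l=1}^L$ is a Nash equilibrium of the infinite-horizon LQ game.

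The main obstacle is the careful invocation of periodic LQR theory for the Ces\`aro average-cost criterion in \eqref{eq:cost_infinite_horizon} with additive Gaussian noise. Two points need attention: (i) one must argue that any non-stabilizing policy yields infinite average cost and hence cannot be optimal, so the search can be restricted to stabilizing policies where the DPRE theory applies; and (ii) one must justify that the optimum against an $L$-periodic opponent profile is itself $L$-periodic (ruling out richer non-periodic or open-loop best responses), which follows from uniqueness of the stabilizing periodic DPRE solution once the first point is settled. After these standard facts about periodic LQR are in place, matching the cycle to the DPRE solution is essentially bookkeeping.
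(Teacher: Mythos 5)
Your proposal follows essentially the same route as the paper: for each fixed player $i$, reduce to a periodic LQR problem with state matrix $A^{cl,-i}_l$, verify periodic stabilizability of $(A^{cl,-i}_l,B^i)$ via Lemma \ref{lemma:cycle_stability} and detectability via $Q^i\succ 0$, and conclude from the uniqueness of the stabilizing periodic solution of the discrete periodic Riccati equation (the paper cites Theorem 6.11 of Bittanti et al.\ for exactly this). Your extra remarks on the average-cost criterion and on why the best response may be restricted to stabilizing periodic policies are details the paper leaves implicit, but the argument is the same.
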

\begin{proof}
An LQR with a state matrix that is periodically time variant has a unique solution under the conditions provided in \cite[Theorem 6.11]{bittanti1991periodic}. In the case of cycles in LQ games, each agent $i$ is facing a LQR problem with a periodic state matrix, $A^{cl,-i}_{Lt + l} = A - \sum_{j \neq i}B^j K^j_l$ for any $t \in \mathbb{N}_+$ and any $l \in [0,L-1]$. Since the periodic matrix $A^{cl}_l = A - \sum_{i = 1}^N B^i K^i_l$ is stable, as proved in Lemma \ref{lemma:cycle_stability}, the pair $[A^{cl,-i}_l, B^i]$ is stabilizable for every agent $i$. Moreover, given that each $Q^i$ is positive definite, each pair $(A^{cl,-i}, (Q^i)^{1/2})$ is detectable. Then, for every agent $i$, the set of matrices $\{P^i_l \}_{l=1}^L$ solves the discrete periodic Riccati equation and is thus the unique symmetric periodic positive semidefinite solution \cite[Theorem 6.11 (iv)]{bittanti1991periodic}.
\end{proof}
To the best of our knowledge, such non-stationary equilibria have not previously been identified in time-invariant LQ games. While we were not able to provide an analytical result on the existences of periodic Nash equilibria, in the next section we will present example of games whose finite-horizon Nash equilibrium converges to a cycle, providing empirical evidence that periodic Nash equilibria may exist.

\section{Simulations}

In the LQR case, the recursion \eqref{eq:iteration} always converges to the unique solution of the algebraic Riccati equation, regardless of the terminal cost. This guarantees that finite-horizon formulations consistently approximate the infinite-horizon problem.

In games, the situation is more delicate. As shown in Section \ref{sec:connections}, infinite-horizon Nash equilibria may be multiple and can be either stationary or periodic. Consequently, the finite-horizon recursion may converge to different equilibria depending on the terminal costs, but it may also fail to converge altogether. Understanding this behavior is critical for assessing the reliability of finite-horizon approximations of infinite-horizon games.

Analytical characterization of global convergence remains elusive. Even in the continuous-time scalar two-agent case, \cite{weeren99} showed that convergence analysis connects with open problems in nonlinear dynamics. As a result, an analytical characterization remains out of reach. Instead, we rely on numerical experiments to complement theoretical insights.

\subsection{Convergence to the fixed points} \label{subsec:fixedpoints}

The fixed points of recursion \eqref{eq:iteration} coincide with the stationary Nash equilibria of the infinite-horizon game (Proposition \ref{pr:equilibrium_equivalence}). Their stability determines whether they can be attained by the recursion: attractive fixed points draw nearby trajectories, while unstable ones repel them.

In the continuous-time scalar two-agent case, \cite{weeren99} proved that at least one attractive fixed point always exists, and that the recursion converges for any terminal cost. Inspired by this analysis, we conducted numerical experiments to investigate whether analogous stability properties hold in discrete-time LQ games. We analyze how different choices of terminal cost matrices \( \mathbf{Q}_T \) affect the evolution of \eqref{eq:iteration}.
Consider the scalar two-agent infinite-horizon game with parameters chosen for simplicity and clarity: $A = 5$, $B_1 = B_2 = Q_1 = Q_2 = R_1 = 1$, and $R_2 = 2$. Using the algorithm proposed in~\cite{salizzoni2024nash}, we compute the three Nash equilibria of this game, together with the corresponding matrices $\mathbf{P}$. In this example, the recursion consistently converges for any initial condition $\mathbf{Q}_T$ to one of the Nash equilibria. Two of them are attractive fixed points, while one behaves as a saddle point. Fig \ref{fig:convergence} summarizes the results obtained.

We also ran multiple simulations to study how often the recursion \eqref{eq:iteration} would converge to a fixed point for different values of $n$, $d$, and $N$. The results are presented in the leftmost table in Fig \ref{fig:table}. We simulated for each cell $10^4$ games by randomly sampling the matrices $A, B^i, Q^i$ and $R^i$. In the scalar case, $n = 1$, the recursion converges all the time. The percentage of convergence decreases as the dimension of the state $n$ increases. 
\begin{figure}[ht]
    \hspace*{-0.1cm}
    \includegraphics[width=\linewidth]{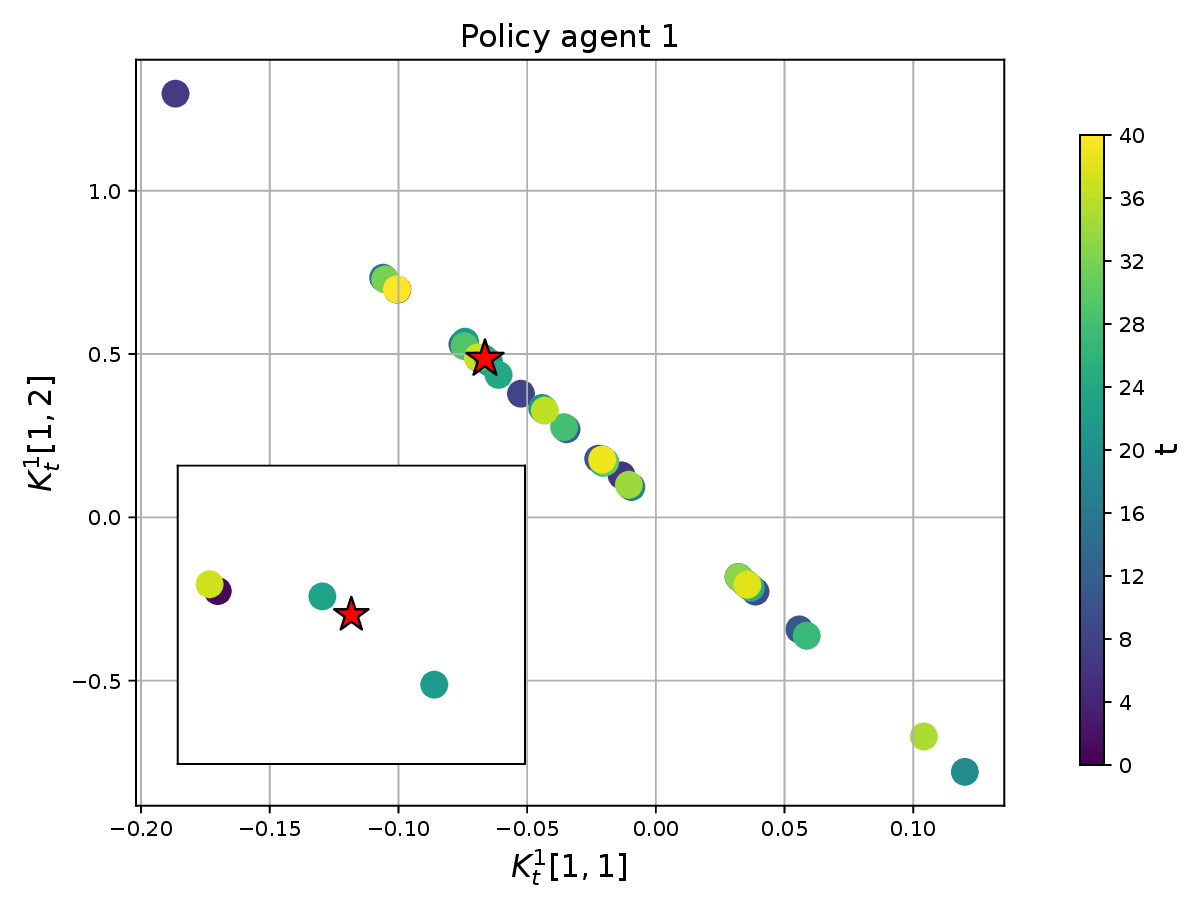}
    \caption{Evolution of the policy of agent $1$ for a randomly generated game with $n = 2$, $d = 1$, and $N = 3$. Although the game horizon was $T = 1000$, we plotted only the $K^1_t$ for $t$ in $[0,40]$ to improve visualization clarity. The game admits a Nash equilibrium (red star), computed using gradient descent, which, however, is unstable for the recursion \eqref{eq:iteration}. }
    \label{fig:NonConverging}
\end{figure}
The recursion \eqref{eq:iteration} did not converge in a non-negligible fraction of instances (as reported in the right panel of Fig. \ref{fig:table}). When convergence fails, trajectories remain bounded but do not settle to an equilibrium. An example is shown in Fig. \ref{fig:NonConverging}, where the policies oscillate without converging, despite the existence of a Nash equilibrium computed by gradient descent. This discrepancy can arise either because all equilibria of the game are unstable for the recursion, or because attractive equilibria exist but the selected terminal costs lie outside their basins of attraction. This demonstrates that convergence cannot be inferred from existence alone, but requires stability of the equilibria.

\subsection{Convergence to cycles} \label{subsec:conv_cycles}

Beyond fixed points, recursion \eqref{eq:iteration} may converge to periodic Nash equilibria as established in Section \ref{sec:connections}. These equilibria are non-stationary but stabilizing, and represent a novel phenomenon in time-invariant LQ games. As the fixed points, also the cycles may be attractive or unstable.

Our simulations show that convergence to cycles is relatively common (see Fig \ref{fig:table}, center panel). In Fig \ref{fig:lenght} we show how frequently attractive cycles of a certain lengths appear for various combinations of $n$, $d$, and $N$. As expected, longer cycles occur less frequently. Interestingly, the cycle length appears to be largely independent of the system’s dimensionality or the number of agents.
\begin{remark}
    This numerical behavior alone does not constitute a proof of existence of periodic Nash equilibria, since long transients or slow drift can produce near-periodicity for very long horizons. Proving existence in a given instance would require solving the algebraic conditions \eqref{eq:condition_cycle}.
\end{remark}

\begin{figure}[ht]
    \hspace*{-0.1cm}
    \includegraphics[width=\linewidth]{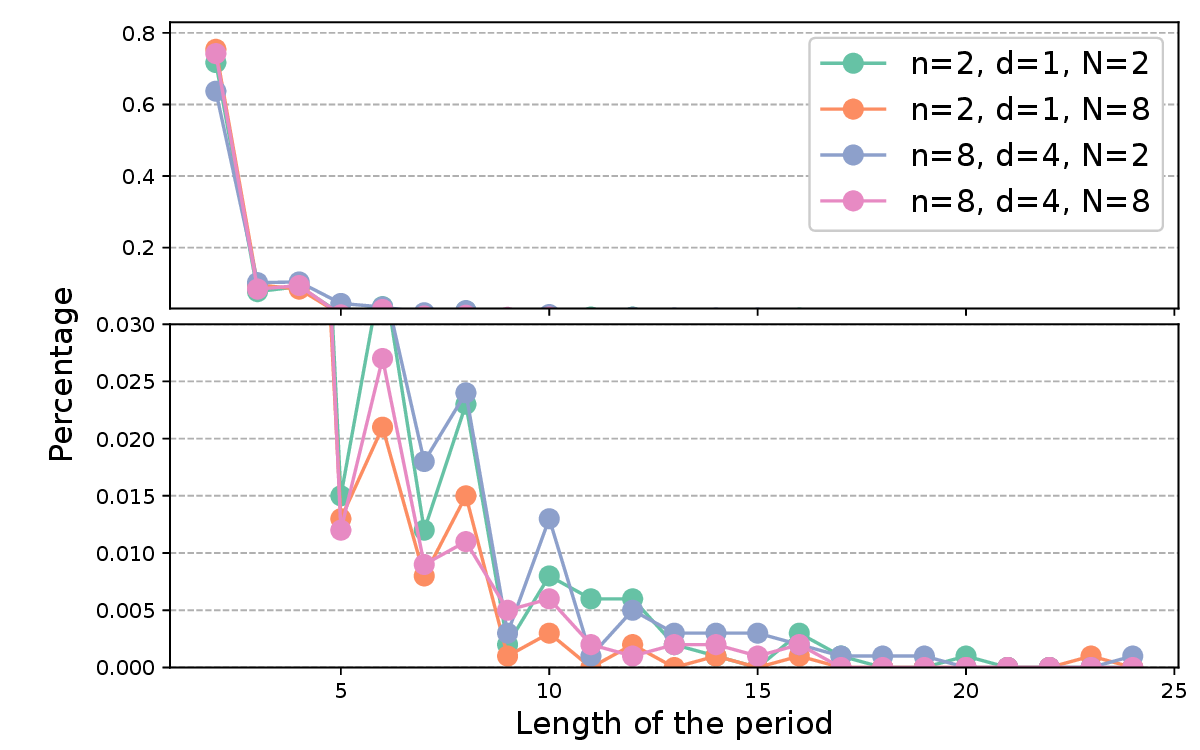}
    \caption{For each case, we kept randomly generating games until we found $10^3$ games whose Nash equilibrium converged to a cycle. In the plot we show the distribution of the length of the cycles from this set.}
    \label{fig:lenght}
\end{figure}
An illustrative example is provided in Fig. \ref{fig:loop} for a randomly generated game, for which the recursion converges to a cycle of length $L = 5$. Notice that the spectral radius of the closed-loop system is not always stable. In particular, there is a set $\mathbf{K}_l$ for which $\rho(A^{cl}_l) = 3.2$. In fact, while Lemma \ref{lemma:cycle_stability} establishes that the cycle as a whole is stabilizing, policies within the cycle are not necessarily stabilizing. In particular, it may occur that for some $l \in [1,L]$ the corresponding closed-loop matrix satisfies $\rho(A^{cl}_l) \geq 1$. This observation is relevant for receding-horizon games, where only one policy at a time is applied, and hence transient instabilities may arise even though the overall cycle guaranties asymptotic stability.
\begin{figure}[ht]
    \hspace*{-0.1cm}
    \includegraphics[width=\linewidth]{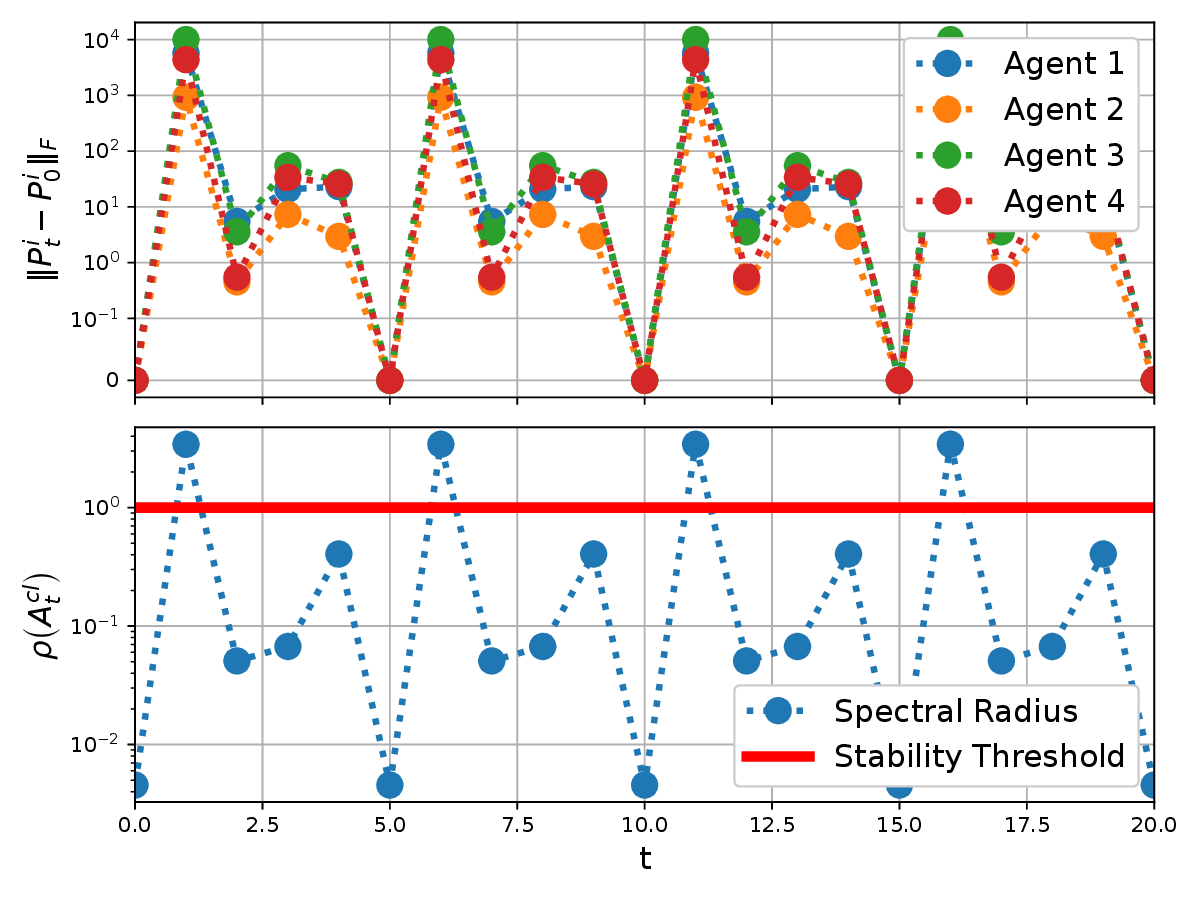}
    \caption{Example of a periodic Nash equilibrium of length $5$  for a randomly generated game with $n = 3$, $d = 2$, and $N = 4$. In the top panel we plotted the difference in the Frobenius norm $\|P^i_{t+1}-P^i_0\|_F$. In the bottom panel, the spectral radius of the closed-loop system $A^{cl}_t$.
    }
    \label{fig:loop}
\end{figure}

\section{Conclusions}

In this paper, we established the connection between finite-horizon and infinite-horizon Nash equilibria in LQ games, by studying the finite-horizon Nash equilibrium as a dynamical system. First, we showed that fixed points of the system are the infinite-horizon Nash equilibria and that any such equilibrium can be reached by choosing the finite-horizon terminal costs accordingly. Second, we proved that cycles of the recursion, if present, are periodic Nash equilibria for the time-invariant game. Third, through extensive simulations we documented three qualitatively distinct outcomes of the recursion: convergence to stationary equilibria, convergence to periodic equilibria, and bounded non-convergent trajectories.

These findings have practical consequences for receding-horizon games. Given that the terminal costs act as equilibrium selectors and attractive sets may be small, performance and stability depend on how receding-horizon problems are posed. Moreover, while a periodic equilibrium is stabilizing over one period, individual policies within the cycle need not be stabilizing at each step, which may cause instability when strategies are applied one step at a time.


\bibliographystyle{IEEEtran}
\bibliography{ref}

\end{document}